\newcommand{\rev}[1]{{\color{black}  #1}}
\newcommand{\VaR}{\mathrm{VaR}}
\newcommand{\ES}{\mathrm{ES}}
\newcommand{\MES}{\mathrm{MMES}}
\newcommand{\DCTE}{\mathrm{DCTE}}
\newtheorem{theorem}{Theorem}[section]
\newtheorem{proposition}[theorem]{Proposition}
\theoremstyle{definition}
\title{  {Non-Parametric Simulation of Multivariate Extreme Events via Spectral Bootstrap}}
\author{Nisrine Madhar$^{1}$, Juliette Legrand$^{2}$ and Maud Thomas$^{3,4}$}
\date{}
\begin{document}

\maketitle

{\small \noindent
$^1$ Universit\'e Paris Cit\'e, CNRS, Laboratoire de Probabilit\'es, Statistique et Mod\'elisation, LPSM, F-75013 Paris, France,\\
$^2$ Univ Brest, CNRS UMR 6205, Laboratoire de Mathématiques de Bretagne Atlantique, France \\
$^3$ Sorbonne Universit\'e, CNRS, Laboratoire de Probabilit\'es, Statistique et Mod\'elisation, LPSM, 4 place Jussieu, F-75005 Paris, France,\\
$^4$ Universit\'e Claude Bernard Lyon 1, Laboratoire de Sciences Actuarielle et Financi\`ere (UR SAF), Lyon, 69007, France

E-mails: madhar@lpsm.paris, juliette.legrand@univ-brest.fr, maud.thomas@univ-lyon1.fr}

\begin{abstract}

  {Inference in extreme value theory relies on a limited number of extreme observations, making estimation challenging. To address this limitation, we propose a non-parametric \rev{simulation scheme}, the \textit{multivariate extreme events spectral bootstrap} \rev{simulation}  {procedure}, relying on the spectral representation of multivariate generalized Pareto-distributed random vectors. Unlike standard bootstrap methods, our approach preserves the joint tail behaviour of the data and generates additional synthetic extreme data, thereby improving the reliability of inference. We demonstrate the effectiveness of our procedure \rev{on} the estimation of tail risk metrics, under both simulated and real data. The results highlight the potential of this method for enhancing risk assessment in high-dimensional extreme scenarios.}
\end{abstract}

\textbf{Keywords --} Multivariate generalized Pareto distributions, Risk management, Simulation of multivariate extremes, Tail risk metrics

\section{Introduction}\label{sec:Intro}

 {
Extreme value theory (EVT) provides the statistical framework for the study of extreme events. It allows statistical inference beyond the observed sample, such as estimating the probability of an event more extreme than those already observed, or evaluating an extreme quantile, that is, a quantile whose order tends to one as the sample size tends to infinity. These questions arise in many fields, such as finance (e.g. financial crises and Value-at-Risk), climatology (e.g. droughts and severe wind storms) and insurance (e.g. cyber risk and bodily injury). For example, one might wish to determine the height of a dyke so that the probability of a flood occurring is $10^{-4}$. Similarly, one might want to estimate the probability of observing a wave higher than the highest wave that has ever been recorded. EVT focuses on the tail of the sample distribution, that is to say, either on the largest order statistics or on the observations exceeding a certain threshold are considered. Therefore, inference is based only on a sub-sample, which can be rather small. On the other hand, the theory is essentially asymptotic. In particular, estimators are proven to have good asymptotic properties when the size of this subset of data tends toward infinity \citep[for more details, see][]{de2006extreme}. 

To overcome this issue, a natural idea is to use bootstrap procedures to increase the size of the sub-sample used for inference. \rev{However, bootstrap cannot be used as it is for EVT. The main reason is that empirical distribution does not satisfy the EVT conditions even though the underlying sample distribution does. Nonetheless, \citet{de2024bootstrapping} were able to derive a bootstrap version of the fundamental expansions for the tail quantile process in the univariate framework of EVT, allowing the bootstrap to mimics, for any estimator in EVT whose asymptotic properties can be established via the tail quantile process, the original behaviour of the estimator.}  

\rev{The main goal of this paper is to propose} a non-parametric approach to expand the number of observations above the extreme level of interest based on joint simulation of multivariate extremes, by extending the approach developed by \citet{legrand2023joint} in the bivariate case to higher dimensions. This  \rev{simulation} procedure makes use of a spectral representation of multivariate extreme vectors, as proven by \citet{ferreira2014generalized, rootzen2018multivariate2}. \rev{This spectral representation writes standard multivariate generalized Pareto vectors as the sum of a standard exponential random variable and a $d$-dimensional vector. The key idea is to simulate standard exponential variables and to use non-parametric resampling, or bootstrap, on the $d$-dimensional vector, independently. This allows to simulate new observations.} We refer to this procedure as the {\it multivariate extreme spectral bootstrap (SB) \rev{simulation} procedure}, or the {\it SB \rev{simulation} procedure} for short. 
\rev{This procedure inherits the property of consistency of the bootstrap \citep[see][]{bickel1981}. }
It enables us to increase the number of extreme observations and thus make more reliable estimations of quantities depending on the tail of a multivariate extreme distribution.

As an illustration, we apply our procedure to the estimation of tail risk metrics (TRM).  There has been a surge in the development of TRM estimation methodologies in modern risk management. The simplest and most common measure often used to evaluate tail risks is the Value-at-Risk (VaR) which is the quantile, usually of a (very) large order, of a given random variable. Common estimators are obtained as empirical versions of the TRM. However, for high-risk scenarios located in the tail, the number of available tail observations is extremely limited. We therefore apply our bootstrap-type procedure to increase the amount of data available in the tail.  This procedure is applied to both simulated and real data. 

The outline of this paper is as follows: after a brief introduction of EVT required for the \rev{SB simulation procedure} in Section~\ref{sec:MGP}, Section~\ref{sec:joint} details the \rev{SB simulation procedure} for multivariate extremes. Section~\ref{sec:illustation} is then dedicated to illustrating the \rev{SB simulation procedure} on the TRM estimations. First,  the  TRMs  are introduced in Section~\ref{sec:TRM}. The performance of the procedure is then  assessed first on simulated data in Section~\ref{subsec:NS} and then on real data in Section~\ref{sec:DataReal}.}

\paragraph{Notations.} Throughout the paper, we use the following notations for vectors. Symbols in bold denote $d$-dimensional vectors.  For example, a $d$-dimensional random vector is denoted by $\bm a = \left(a_1,\ldots,a_d\right)$, and $\bm a_{-j}$ denotes the vector $\bm a$ deprived from its $j$-th component for $j=1,\ldots,d$.  Operations and relations are meant component-wise, that is, for example, $\bm a\geq\bm b = \left(a_1\geq b_1,\ldots,a_d \geq b_d\right)$.

\section{Elements of  Extreme Value Theory}\label{sec:MGP}




 {This section presents the elements of EVT required for the SB simulation procedure.

EVT is a branch of statistics developed and widely used to model extreme events, such as major floods, heatwaves, or severe financial losses \citep{katz2002statistics, embrechts2013modelling}. EVT was originally developed between the 1920s and 1940s, with key contributions from \citet{frechet1927loi, fisher1928limiting, von1936distribution, gnedenko1943distribution}. In the 1970s, it was broadened with the Generalized Pareto (GP) framework, making it practical for real-world applications in risk assessment and natural hazard modeling. The multivariate EVT was systematically developed from the 1970s onward \citep{balkema1977max, de1977limit}, extending the univariate limit distributions to joint distributions.

Given that the \rev{SB simulation procedure} relies on multivariate EVT, we provide the essential components directly in the multivariate framework.
}

  {Let $\bm{X}$ be a random vector in $\mathbb{R}^d$.  {Introducing the appropriate limiting distribution for multivariate threshold exceedances requires considering the {\it maximum domain of attraction} of a multivariate extreme value distribution. } To this end, let $\bm{X}_1, \dots, \bm{X}_n$ be independent and identically distributed (i.i.d.) copies of $\bm{X}$. Suppose there exist two sequences $\bm{a}_n \in (0,\infty)^d$ and $\bm{b}_n \in \mathbb{R}^d$, and $G$ a multivariate c.d.f with non-degenerated margins such that 
\begin{equation}
\label{eq:max-domain}
    \lim_{n\to\infty} \mathbb{P}\left(\{\max_{1\leq i\leq n} \bm X_i -\bm b_n\} / \bm a_n\leq \bm x \right) = G(\bm x),
    \end{equation}
    for all $\mathbf x \in \mathbb R^d$ where $G$ is continuous. Then, $\bm X$ is said to be in the \textit{maximum domain of attraction} of the multivariate extreme value distribution $G$ \citep[e.g.][]{beirlant2004, de2006extreme, coles2001}. If \eqref{eq:max-domain} holds,  then there exists a distribution $H$ with non-degenerate margins such that \begin{equation}
        \label{eq:pot}
        \lim_{n\to\infty} \mathbb{P}\left(\{\bm X -\bm b_n\} /\bm a_n \leq \bm x \mid \bm X \nleq \bm b_n \right) = H(\bm x),
    \end{equation}
        for all $\mathbf x \in \mathbb R^d$ where $H$ is continuous. Here, $\bm X \nleq \bm b_n$  means that at least one component of $\bm X$ exceeds its corresponding threshold in $\bm b_n$.
    The limiting distribution $H$ in \eqref{eq:pot} belongs to the family of multivariate generalized Pareto (MGP) distributions. This motivates the modeling of  multivariate threshold exceedances using the MGP distributions. More specifically, if $\bm X$ is in the maximum domain of attraction of a multivariate extreme value distribution, then for a sufficiently high threshold $\bm u\in \mathbb R^d$, the conditional distribution of $\bm X - \bm u \mid \bm X \not \leq \bm u$ can be approximated by a MGP distribution  \citep{rootzen2006multivariate}.} 

 {

The result of \citet{rootzen2006multivariate} can be viewed as a multivariate extension of the fundamental theorem of \citet{balkema1974residual, pickands1975statistical} in the univariate case. Let $\zeta_1,\ldots,\zeta_n$ be $n$ real-valued i.i.d. random variables  with unknown distribution function $F$ and a threshold $u \in \mathbb R$. Then, if $F$ belongs to the maximum domain of attraction of a univariate generalized extreme value distribution, the conditional distribution $F_u$ of the excesses above $u$, that is $F_u(z) = \mathbb{P}\left(\zeta_1-u > z \mid \zeta_1>u \right)$
converges toward a univariate GP distribution, as $u$ tends to $\infty.$ }


  {For a MGP distribution $H$, its} marginals $H_j$, $j=1,\ldots, d$, are in general not univariate GP distributions. However, their restrictions to the positive subset are GP distributed \citep[see e.g.][]{rootzen2018multivariate2}, i.e. for all $j=1,\ldots, d$,
\begin{equation}\label{eq:MGPmargin}
H_j^+(x) = \mathbb{P} \left[Z_j > x \mid Z_j >0 \right] = \left(1 + \gamma_j x/\sigma_j\right)_+^{-1/\gamma_j}, \mbox{ for }  x\geq 0 \mbox{ such that } \sigma_j+\gamma_j x>0,
\end{equation}
where $a_+ = \max(a,0)$, $\sigma_j>0$ is the $j$-th marginal scale parameter and $\gamma_j \in \mathbb R$ is the $j$-th marginal shape parameter   {(also called the tail index of the $j$-th component)}. If $\gamma_j = 0$, we use the classical convention by taking the limit as $\gamma_j\to 0$ in Equation~\eqref{eq:MGPmargin}.   {This parameter}  {
reflects the heaviness of the tail, the larger $\gamma_j$, the heavier the tail of the $j$-th component. Three maximum sub-domains of attraction may be distinguished according to the sign $\gamma_j$. The case $\gamma_j>0$ corresponds to the heavy-tailed distributions, the case $\gamma_j=0$ to the light-tailed distributions and $\gamma_j<0$ to the right-finite tail distributions, \citep[for more details on the properties of the different maximum sub-domains of attraction, see e.g.]{coles2001, de2006extreme}}.  {The MGP distribution parameter vectors, denoted} $\bm \sigma = (\sigma_1,\ldots, \sigma_d)$ and $\bm \gamma = (\gamma_1,\ldots, \gamma_d)$,  {correspond to} the marginal scale and shape parameters. When $\bm \sigma = \bm 1$ and $\bm \gamma = \bm 0$, we say that $H$ is a standard MGP distribution.  {Note that any} standard MGP distributed vector $\bm Z$ can be transformed into a general MGP distributed vector $\bm Y$   {using the following one-to-one relation  $$\bm Y=\bm \sigma (e^{\bm\gamma \bm Z}-1)/\bm\gamma.$$} 
In light of this result, we will restrict our attention to the special case of standard MGP vectors.

 {\cite{rootzen2018multivariate2} have shown 
that a standard MGP vector $\bm Z$  can be decomposed as follows}
\begin{equation}\label{eq:SMGPRep}
    \bm Z = E + \bm T - \max_{1\leq k\leq d} T_k,
\end{equation}
where  $E$ is a unit exponential variable and $\bm T=(T_1,\dots,T_d)$ is any random vector in $\mathbb{R}^d$, independent of $E$,   {and such that $\max_{1\leq k\leq d}  T_k > -\infty$ almost surely and $\mathbb{P}(T_k > -\infty) > 0$ for all $k= 1, \dots,d$}.   {Such random vectors $\bm T$ are called spectral random vectors \citep[][Section 4]{rootzen2018multivariate2}, motivating the term  ``multivariate extreme spectral bootstrap''.}
From Equation~\eqref{eq:SMGPRep}, we derive the cornerstone of the \rev{SB simulation procedure}. Let us first define, for all $j=1,\dots,d$,


  {\begin{equation}
    \label{eq:wk_w1}
    \Delta_j:=Z_j-\max_{1\leq k\leq d} Z_k \, .
\end{equation}}
  {Then, Equation \eqref{eq:SMGPRep} can be rewritten as follows 
\begin{equation}\label{eq:MGPEq}
    \bm Z = E+\bm \Delta,
\end{equation}
where $\bm \Delta = \left(\Delta_1,\dots,\Delta_d\right)$ with the $\Delta_j$'s defined as in \eqref{eq:wk_w1}. With  this simple rewriting, we may proceed to the simulation of standard MGP vectors $\bm Z$, via the simulation of $\bm \Delta$  using bootstrapping techniques.}

 {Note that the \rev{SB simulation procedure} (Algorithm~\ref{alg:JointMGP}) relies on the spectral representation \eqref{eq:SMGPRep} for multivariate extreme vectors. An underlying assumption is thus that the components of $\bm X$ have non-trivial asymptotic tail dependence \citep[see e.g.][and Section 1 of Supplementary material]{coles1999}.} 


\section{Multivariate extreme spectral bootstrap procedure}\label{sec:joint}



  {This section introduces the multivariate extreme \rev{SB simulation procedure} developed in this study, extending the work of \citet{legrand2023joint}, who focused on the bivariate case. Leveraging the stochastic representation in \eqref{eq:MGPEq}, we propose a  {non-parametric} joint simulation algorithm,  {based on the spectral representation of MGP vectors, }  {to generate} samples from a MGP  {distribution}.
The performance of the procedure will be demonstrated through TRM estimation on synthetic data (Section~\ref{subsec:NS}) and real-world data (Section~\ref{sec:DataReal}).}

For a given sample $\bm Z_1,\dots,\bm Z_n$ of i.i.d. copies of a MGP vector $\bm Z$ in $\mathbb{R}^d$, the  {SB procedure} (see Algorithm~\ref{alg:JointMGP}) performs the stochastic generation of $m\in\mathbb{N}$ new i.i.d. copies of $\bm Z$, denoted $\bm Z_1^\ast,\dots,\bm Z_m^\ast$. 

Algorithm~\ref{alg:JointMGP} is as follows. First, we simulate unit exponential random variables. Then, independently on this first step, we simulate new realisations   {$\bm \Delta_1^\ast,\dots, \bm \Delta_m^\ast$ of the differences $\bm \Delta$} using a non-parametric bootstrap approach, i.e. by resampling among the observed differences  $\bm \Delta_{1},\dots,\bm\Delta_n$. Finally, we use the stochastic relation in Equation~\eqref{eq:MGPEq} to merge both simulation steps, in order to generate new realisations of MGP vectors.

\begin{algorithm}
\caption{Multivariate extreme spectral bootstrap \rev{simulation} algorithm}
\label{alg:JointMGP}
\begin{algorithmic}[1]
\State  \textbf{require}   {$n$ i.i.d. observations  $\bm Z_1,\dots,\bm Z_n$} of a standard  MGP distribution in $\mathbb{R}^d$
\State  \textbf{output}   {$m$ standard MGP simulated samples $\bm Z_1^\ast,\dots, \bm Z_m^\ast$ in $\mathbb{R}^d$}
\Procedure{}{} 
\State   {Compute $\Delta_{i,j} \leftarrow Z_{i,j}-\max_{1\leq k\leq d}Z_{i,k}$}, for $1\leq i\leq n$ and $1\leq j \leq d$
\State Generate  $E_1,\ldots, E_m \overset{\text{i.i.d.}}{\sim} \mathrm{Exp}(1)$, independently of $\bm \Delta_{1},\dots,\bm\Delta_n$ ~\\where $\bm\Delta_i=\left(\Delta_{i,1},\dots,\Delta_{i,d}\right)$ for $1\leq i\leq n$
\State Generate   {$m$ bootstrap samples}  $\bm \Delta_1^\ast,\dots, \bm \Delta_m^\ast$ from $\left(\bm \Delta_{1},\dots,\bm\Delta_n\right)$ 
\EndProcedure
\State  \textbf{return}   {$\bm Z_{\ell}^\ast \leftarrow E_\ell + \bm\Delta_\ell^\ast$ for  $1 \leq \ell \leq m$}
\end{algorithmic}
\end{algorithm}

 {We illustrate the performances of Algorithm~\ref{alg:JointMGP} with a numerical example involving a 3-dimensional MGP vector $\bm{Z}$. The vector $\bm{T}$ (as defined in Equation~\eqref{eq:SMGPRep}) is chosen as a centred multivariate Gaussian vector with correlation coefficients $\rho_{1,2} = 0.4$, $\rho_{1,3} = 0.8$, and $\rho_{2,3} = 0.1$. We simulate a sample of the 3-dimensional vector $\bm Z$ of size $2,000$ and a bootstrap sample $\bm Z^*$ of size $10000$. 

Figure~\ref{fig:Jointk3_scatter} shows the scatter plots of ($Z_1$, $Z_2$) in Figure a), ($Z_1$, $Z_3$) in Figure b) and ($Z_2$, $Z_3$) in Figure c) of the simulated sample in black crosses. On each plot, the bootstrap sample ($Z_1^*$, $Z_2^*$) in Figure a), ($Z_1^*$, $Z_3^*$) in Figure b) and ($Z_2^*$, $Z_3^*$) in Figure c) are overlaid (in red circles). These graphs show that the shape of the bootstrap sample seems to be similar to that of the simulated sample. Besides, it can be seen that some red circles (bootstrap sample) are above than the black crosses (simulated sample) for all components. This indicates that the bootstrap sample comprises a greater number of observations in the extreme regions, and that Algorithm ~\ref{alg:JointMGP} extends the simulated sample to include observations with more extreme values than those in the simulated sample. 

\begin{figure}
 \centering
   \begin{tabular}{ccc}
 
  \includegraphics[width=0.3\textwidth]{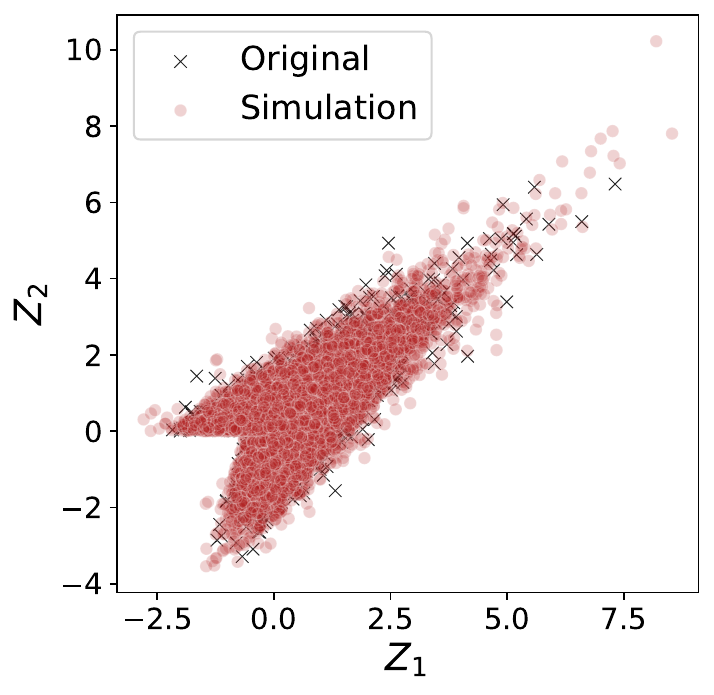}
& \includegraphics[width=0.3\textwidth]{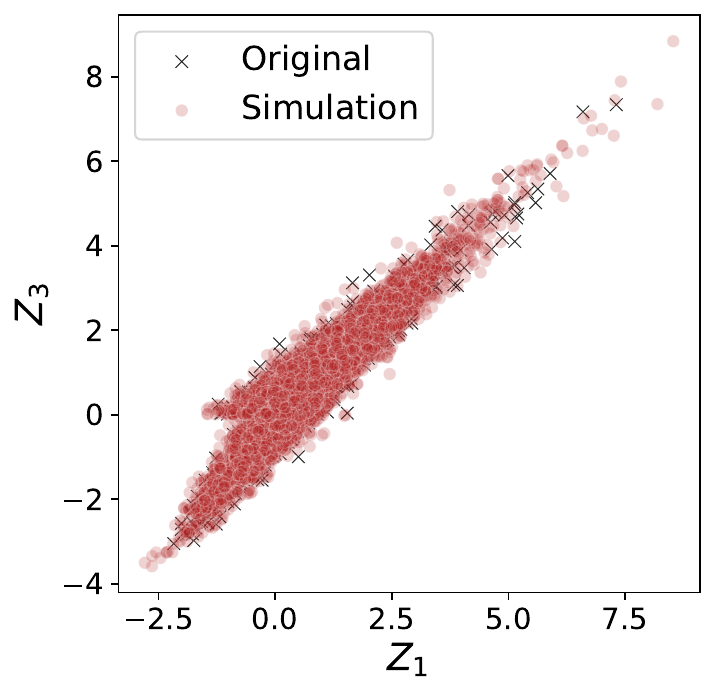}
&  \includegraphics[width=0.3\textwidth]{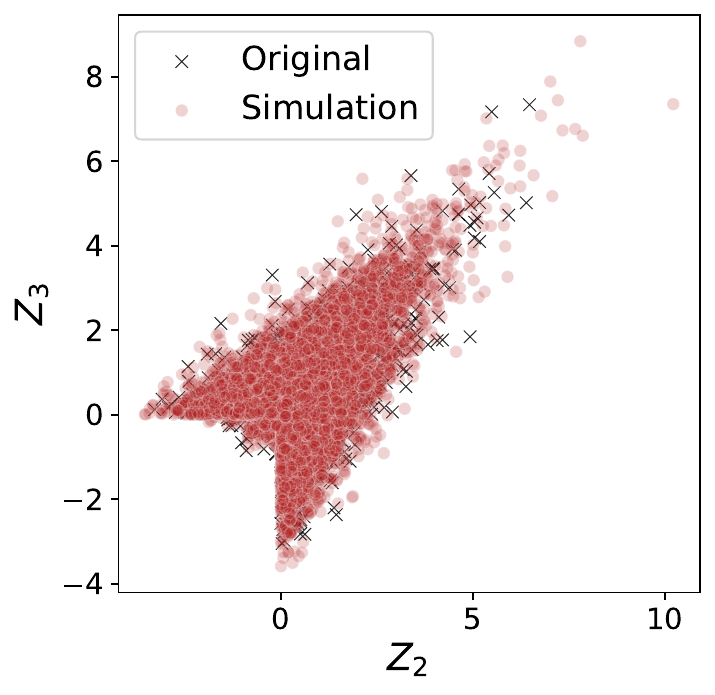}
\\
 a) & b) & b) \\ 
\end{tabular}
\caption{Bivariate representations of the  {simulated} sample of size 2,000 of a MGP vector $\bm Z \in \mathbb{R}^3$ with $\bm T$ (see Equation~\eqref{eq:SMGPRep}) distributed according to a centred multivariate Gaussian distribution with correlation coefficients $\rho_{1,2}=0.4,\rho_{1,3}=0.8$ and $\rho_{2,3}=0.1$ and the simulated sample $\bm Z^\ast$ of size $10000$. Figures a), b) and c) show scatter plots of the components  ($Z_1$, $Z_2$), ($Z_1$, $Z_3$) and ($Z_2$, $Z_3$) of the simulated sample (black crosses). On each plot, the components of the bootstrap sample ($Z_1^*$, $Z_2^*$), ($Z_1^*$, $Z_3^*$) and ($Z_2^*$, $Z_3^*$) are added (red circles).}
  \label{fig:Jointk3_scatter}
\end{figure}

}

 {Proposition \ref{prop:sim1} } guarantees that the samples obtained with  Algorithm~\ref{alg:JointMGP} \rev{are distributed according to the same distribution as the original observations.} 
\rev{The SB simulation procedure is applied to standard MGP vectors so that  the procedure itself ensures the preservation of the dependence structure of the original MGP vectors.   This is achieved by applying the bootstrap solely to the spectral component $\Delta$ in Equation \eqref{eq:MGPEq}, which encodes this dependence. Meanwhile, generating (new) independent exponential variables ensures that the SB procedure produces genuinely new observations.}

The proof can be found in Section~\ref{sec:theo} with the statement of the proposition. 


 {Figure~\ref{fig:Jointk3} presents the quantile-quantile (QQ) plots  between the simulated sample and bootstrap sample for each component.
The QQ plots show a good fit between the marginal distributions of the bootstrap samples $Z^\ast_j$ and the marginal distributions of the simulated sample $Z_j$ for $j \in \{1,2,3\}$. Consequently, the QQ plots indicate that Algorithm~\ref{alg:JointMGP} generates samples with the same marginal distributions as the ones of the simulated sample as shown in Proposition \ref{prop:sim1}.}

\begin{figure}
 \centering
  \begin{tabular}{ccc}\includegraphics[width=0.3\textwidth]{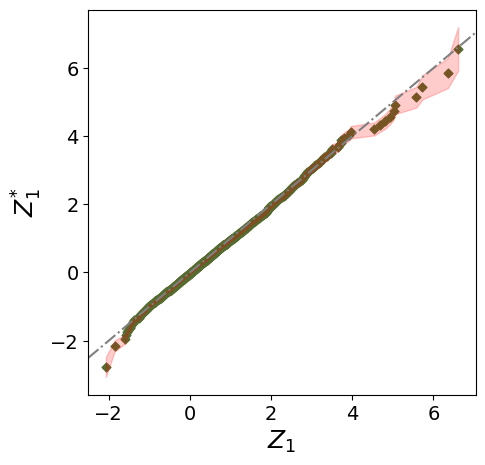} &  \includegraphics[width=0.3\textwidth]{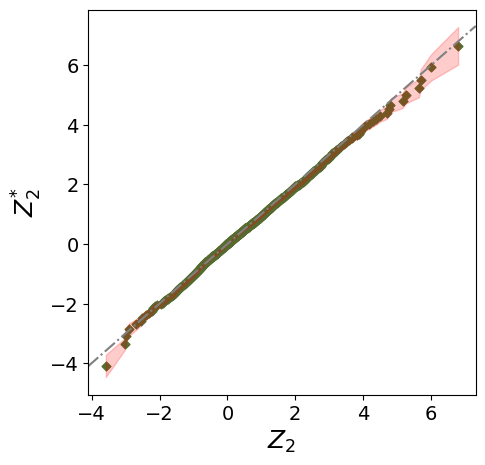} &
 \includegraphics[width=0.3\textwidth]{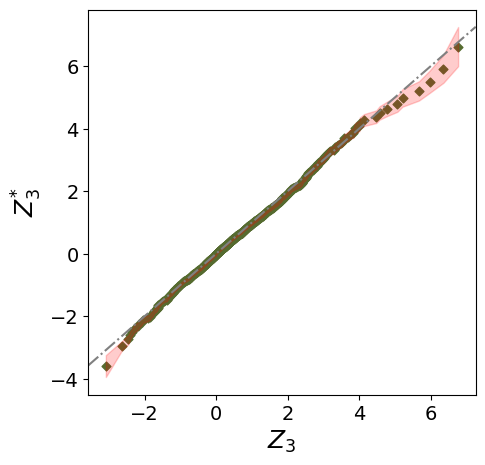} \\
 a) & b)&  c) 
 
\end{tabular}
\caption{Bivariate representations of the  {simulated} sample of size 2,000 of a MGP vector $\bm Z \in \mathbb{R}^3$ with $\bm T$ (see Equation~\eqref{eq:SMGPRep}) distributed according to a centred multivariate Gaussian distribution with correlation coefficients $\rho_{1,2}=0.4,\rho_{1,3}=0.8$ and $\rho_{2,3}=0.1$ and the  {bootstrap} sample $\bm Z^\ast$ of size $10000$. Figures a), b) and c) display the QQ plots of ($Z_1$, $Z^\ast_1$), ($Z_2$, $Z^\ast_2$) and ($Z_3$, $Z^\ast_3$), respectively, and their associated 95\% point-wise confidence intervals based on $1000$ bootstrap replications.}
  \label{fig:Jointk3}
\end{figure}

\section{Illustration: estimation of Tail Risk Metrics}\label{sec:illustation}

 {As mentioned in the introduction, the \rev{SB simulation procedure} addresses the problem of the scarcity of available observations in extreme regions, allowing for more reliable estimation of quantities depending on the tail of MGP distributions.}
  {We thus illustrate this by assessing the performance of the \rev{SB simulation procedure} (Algorithm \ref{alg:JointMGP} by estimating TRMs on both simulated data and on real financial data.

The three distinct TRMs that will be considered are introduced in Section~\ref{sec:TRM}. The simulation framework and the illustration of the \rev{SB simulation procedure} on simulated data are presented in Sections~\ref{subsec:NS}), and the illustration on real financial data is contained in Section~\ref{sec:DataReal}.}

\subsection{Tail Risk metrics}\label{sec:TRM}


  {As all TRMs are defined in terms of the Value-at-Risk (VaR), we first recall its definition. For a given level $\alpha \in (0,1)$, the VaR of level $1-\alpha$ of a random variable $X$}, denoted $\textrm{VaR}_{\alpha}(X)$, is defined as its $(1-\alpha)$-quantile: 
\[
\VaR_{\alpha}(X) = \inf\{x \in \mathbb{R} : \mathbb{P} \left(X > x\right)\leq \alpha\}. \, 
\]
  {In applications, one is interested in VaR for extremely large levels, i.e. $\alpha\approx 0$.}
  {The $\VaR_\alpha$ then corresponds to a high quantile of $X$. While the term \textit{Value-at-Risk} is standard in financial risk management \citep[e.g.][]{mcneil2015quantitative}, it is commonly referred to as the \textit{return level} in some other fields \citep[see][]{coles2001}.}

 Several limitations of the $\VaR$ have been identified in the literature. In particular, this TRM does not account for the severity of the random variable $X$, which may be a loss in financial applications \citep{yamai2005value}, or the magnitude of a climatic variable in environmental applications, e.g. the magnitude of daily precipitation \citep{grundemann2023extreme}. To address this shortcoming, alternative TRMs have been proposed, including the {\it Expected Shortfall} ($\ES$) \citep{artzner1999coherent}. The $\ES$  at level $1-\alpha$  {of X} is defined as 
 
\begin{equation}\label{eq:ES}
    \ES_{\alpha}(X) = \mathbb{E}\left[X \mid X>\VaR_{\alpha}(X)\right] 
\end{equation}
In words, the $\ES_\alpha$ corresponds to the  {expectation of observations larger than} $\VaR_\alpha$  \citep[see, e.g.,][]{artzner1999coherent,rockafellar2000optimization}. The $\ES$ is a useful tool for gaining insights into the severity of losses above a high quantile (i.e. the $\VaR$). It is therefore commonly used 
 in financial institutions for the calculation of the minimum capital requirements for market risk as specified by financial regulators \citep[see][for example]{basel2013}.  {In fields other than financial risk management, one may be interested in the $\ES$ rather than the $\VaR$.} Due to its importance, many techniques have been proposed to estimate the $\ES$ \citep[for a comprehensive review, see][]{nadarajah2014estimation}.   {In a parametric approach, the} $\ES$ is typically computed as the empirical mean   {of observations exceeding the $\VaR$, where the latter is estimated as a high quantile of a fitted extreme value distribution \citep[e.g.][]{coles2001}. Building on this framework, more sophisticated estimation techniques have been proposed \citep[][]{mcneil2000estimation, singh2013extreme}.}
 
  {Often, the severity of an event depends on several phenomena.} One limitation of the $\ES$ lies in  {the fact that it only considers the variable of interest $X$},  {ignoring the potential for asymptotic dependence between $X$ and other random variables that could impact the severity of $X$}. To address this, we consider two alternative risk metrics   {that account} for this asymptotic dependence   {structure} in their univariate risk evaluation of $X$. 

  {Let now $\bm{X} = (X_1, \dots, X_d)$ be a random vector in $\mathbb{R}^d$.  {The components of $\bm X$ represents variables that might have an impact of the severity of an phenomenon, they are classically called {\it risk factors}.}  A commonly used TRM in this multivariate setting is the marginal expected shortfall, defined} as the conditional expectation $\mathbb{E}\left[X_j \mid S(\bm{X}) \geq v\right]$, where $S$ is a chosen statistic of $\bm{X}$—such as the sum, minimum, or maximum—and $v \in \mathbb{R}$ is a threshold indicating the occurrence of an extreme event \citep{cai2015estimation}.
Building on this definition, we introduce a novel extension by conditioning on a joint extreme event involving all risk factors excluding the target variable $X_j$. Specifically, we define the {\it Multivariate Marginal Expected Shortfall} ($\MES_\alpha$) of $X_j$ at level $1-\alpha$ as follows:
\begin{equation}\label{eq:MES}
    \MES_{\alpha}(X_j;\bm X) = \mathbb{E}\left[X_j \mid \bm X_{-j}\geq \bm v^\alpha_{-j}\right] 
\end{equation}
where  {$\bm v^\alpha =\left(\VaR_\alpha(X_1), \ldots, \VaR_\alpha(X_d)\right) \in\mathbb{R}^{d}$ denotes the vector of component-wise Value-at-Risk at level $1-\alpha$ for $\bm X$}.  
Through this formulation the aim is to capture the behaviour of $X_j$ when similar risk factors reach extreme levels.


  {We further introduce a third TRM within the multivariate framework that captures the risk associated with the target variable $X_j$ when both $X_j$ and similar risk factors simultaneously} reach extreme levels. In the bivariate case   {($d = 2$), previous studies have explored} the incorporation of a dependent risk factor to enhance tail risk quantification for a target variable \citep[see, e.g.][]{Josaphat2021, goegebeur2024dependent}. In this work, we adopt the concept of the {\it Dependent Conditional Tail Expectation} ($\DCTE$) as defined by \citet{goegebeur2024dependent} in the bivariate setting, and extend it to the general multivariate case as follows:
\begin{equation}\label{eq:DCTE}
    \DCTE_{\alpha}(X_j; \bm X) = \mathbb{E}\left[X_j | \bm X\geq \bm v^\alpha \right]
\end{equation}
 


  {In the following sections, we apply the \rev{SB simulation procedure} developed in Section~\ref{sec:joint} on both simulated and real data to enhance the empirical estimation of the three TRMs under consideration: the expected shortfall (ES; Eq.~\eqref{eq:ES}), the multivariate marginal expected shortfall (MMES; Eq.~\eqref{eq:MES}) and the dependent conditional tail expectation (DCTE; Eq.~\eqref{eq:DCTE}).}


\subsection{Illustration on simulated data}\label{subsec:NS}

In this section, we first present the simulation framework and  then the illustration of Algorithm~\ref{alg:JointMGP} on simulated data.)

  {The framework is inspired by the field of finance and more specifically financial returns: we consider a 3-dimensional random vector $\bm X =(X_1,X_2,X_3)$ with  marginal Student's $t$-distribution with relatively low degrees of freedom $\nu_1, \nu_2, \nu_3$, respectively. The Student's $t$ distribution is heavy-tailed and the tail index is equal to $1/\nu$. So, the smaller $\nu$ the heavier the tail. In the following, we will consider that the marginal degrees of freedom will be larger than 1 so that the expectation of the components is finite.  This choice of distribution for the marginals is chosen  so as to mimic financial returns, which are typically heavy-tailed.}

To fully characterize the joint distribution of $\bm X$, we need to define the  dependence structure in addition to the marginal distributions. Sklar's theorem \citep{sklar1959fonctions} states that   {for any} joint distribution $\bm F$ of a random vector $\bm X$   {in $\mathbb{R}^d$} with   {continuous} marginals $F_1,\dots,F_d$, there exists a unique copula $\mathcal{C}:[0,1]^d\to [0,1]$ such that $\bm F(x_1,\dots,x_d) = \mathcal{C}\left(F_1(x_1),\dots,F_d(x_d)\right)$. Copulas   {thus provide a} powerful framework for modeling the dependence structure of a random vector separately from its marginals. 

Let us also recall that an underlying assumption of the   {SB procedure} is that the components of $\bm X$ have non-trivial asymptotic tail dependence. In order to ensure that this is indeed the case, the Gumbel copula \citep{nelsen2006introduction} is employed to derive dependent extremes in the upper tail.

\begin{align}\label{eq:copGumb}
    \mathcal{C}(\bm y) := \exp\left(- \left( \sum_{i=1}^3\left[-\log(y_i)\right]^{\theta
}\right)^{1/\theta}\right),
\end{align}
where $\theta \geq 1$ is the copula parameter. The larger $\theta$, the stronger the asymptotic dependence structure between the components of $\bm X$. Consequently, the pair $\left(\bm\nu,\theta\right)$ parametrize the structure of dependence through Equation~\eqref{eq:copGumb}, fully characterizes the joint distribution of $\bm X$.

The numerical experiments are performed on simulated data sets $\mathcal D\in \mathbb{R}^{1500\times 3}$, with $\nu_1 = 2$,  $\nu_2=3$, $\nu_3=2.5$. From this parametric framework, we can derive the theoretical values of the TRMs and infer quantities describing the tail of some conditional distribution, that will be used as benchmarks in the following sections.

  {Algorithm~\ref{alg:JointMGP} enables the generation of new realisations from  {a standard MGP distribution}. However, in practice, the observed data are not necessarily  {standard} MGP-distributed. To address this, we rely on the classical approximation from EVT: after appropriate marginal standardization, one may assume that, for a sufficiently high threshold vector $\bm u$, the distribution of the multivariate exceedances $\bm X - \bm u \mid \bm X \nleq \bm u$ can be approximated by a standard MGP distribution,  {see Section~\ref{sec:MGP} or \citep{coles2001}}. More precisely, we consider the following two steps: 
\begin{enumerate}
  \item \textbf{Standardization to exponential margins:} transform the component of $\bm X$ to have standard exponential margins using the probability integral transform. For each $j=1,\ldots,d$, define  $X^E_j = \log \left(1- F_{j}(X_j)\right),$
  where $F_{j}$ is the $j$-th marginal of $\bm X$. 
  \item \textbf{Computation of multivariate exceedances:} following the definition of \cite{rootzen2006multivariate}, compute the multivariate excesses vector as follows \begin{equation}\label{eq:Z_XE}
      \bm Z = \bm X^E - \bm u^E \mid \bm X^E  \not \leq \bm u^E \, ,
  \end{equation} 
where $\bm u^E$ is a suitably chosen threshold on the standard scale, see right below. 
\end{enumerate}}

  {As recalled in Section \ref{sec:MGP}, if the threshold $\bm u^E$ is sufficiently high, the distribution of the vector of excesses $\bm Z$ defined in \eqref{eq:Z_XE} can be approximated by a MGP distribution. This approximation justifies the application of the \rev{SB simulation procedure} in order to increase the number of joint extreme observations. The choice of $\bm u^E$ is a classical discussion in EVT. One approach that can be considered is the use of graphical tools such as the stability plot approach proposed in \cite{kiriliouk2019peaks}. }

  {After running the algorithm on the transformed data $\bm{Z}$}, the resulting bootstrap sample outputs $\bm Z^\ast$ can be projected back into the original space  {(e.g. of the simulation data)} using the inverse transformation:
\[
X^\ast_j = F_j^{-1}\left(1-e^{Z^\ast_j+u^E_j}\right), \mbox{ for } j=1,\ldots,d
\] 
where $F^{-1}_j$ denotes the quantile function of the $j$-th  {component} $X_j$. 
  {This backward transformation enables estimation of TRMs, or any other quantity of interest, on the  {simulated sample} $\bm X$ to be performed using the  {bootstrap} samples $\bm X^\ast$.}


Finally, \rev{in the sequel, we will consider} three different samples:
\begin{enumerate}
    \item {\bf  {Simulated} sample}  generated using the parametric framework above, denoted $\mathcal{D}$. 
    \item {\bf  {Bootstrap} sample}  obtained using Algorithm~\ref{alg:JointMGP}, denoted $\mathcal{D^\ast}$.
    \item {\bf Extended sample}  corresponding to the concatenation of the  {simulated} and  {bootstrap} samples, denoted $\mathcal{D} \cup \mathcal{D^\ast}$.
\end{enumerate}


We assess the performance of \rev{the SB simulation } procedure (Algorithm~\ref{alg:JointMGP}) \rev{on the estimation of TRMs}, as defined in Equations~\eqref{eq:ES}, \eqref{eq:MES}, and \eqref{eq:DCTE}, across various parameter settings. Special emphasis is placed on the strength of asymptotic dependence, which is governed by the copula parameter $\theta$ and the level $\alpha$ (see Supplementary Material, Section 1). Estimated TRMs are compared to their respective theoretical benchmark values (see Supplementary Material, Section 2). For simplicity, we focus on the first component $X_1$ being our target risk factor, though the procedure applies equally to $X_2$ and $X_3$.


Since all TRMs are derived from the $\VaR$, particular care must be taken to ensure its accurate estimation. However, this aspect is not the main focus of our current investigation. In this study, the performance of our approach is assessed using the theoretical $\VaR$, specifically computed as a high quantile of a Student’s $t$-distribution. Alternative methods for estimating the $\VaR$—including an empirical approach and estimation based on a univariate extreme value distribution—have also been explored. Detailed results of these analyses are provided in the Supplementary Material (Section 4). Notably, they suggest that when the theoretical $\VaR$ is unavailable, as is often the case in real-world applications, the univariate extreme value approach offers a promising alternative for estimating the $\VaR$.

\rev{Next, we fix the dependence parameter $\theta$ to be equal to $2.6$ (this value is driven by typical financial applications, other values of copula parameter have been tested, yielding to similar improved estimation results).}
Given that the $\VaR$ estimation technique and \rev{$\theta$} are fixed, we can now proceed to investigate the performance of our approach with respect to \rev{the level} $\alpha$ and \rev{the bootstrap sample size $m$}. 
Since estimation errors may be sensitive to the specific  {simulated} sample $\mathcal{D}$, it is important to ensure that the observed performance of the approach is not tied to a particular realization of $\mathcal{D}$. To this end, \rev{100} distinct  {simulated} samples $\mathcal{D}$ were generated, each of size $n$ and drawn from the same joint distribution described in Section~\ref{subsec:NS}. For each  {simulated} sample, a set of  \rev{100} {bootstrap} samples $\mathcal{D}^\ast$, each of size \rev{$m\in\{100,1000,10000\}$}, was then produced using the  \rev{SB simulation} procedure outlined in Algorithm~\ref{alg:JointMGP}. TRMs were estimated \rev{on each of the 100 simulated samples $\mathcal D$ and of extended samples $\mathcal D \cup \mathcal D^{\ast}$}.

 \rev{Figure~\ref{fig:AlphaThetaTRM} and Table~\ref{tab:ImpactThetaTRM}  summarize the results for TRMs estimations at two different levels  {$\alpha \in \{0.0025,0.0003\}$}. 
Boxplots of the distributions of the TRMs estimates (Figure~\ref{fig:AlphaThetaTRM}) displays the TRM estimates based on the original samples $\mathcal D$ (grey hatched box) and on the extended samples $\mathcal D \cup \mathcal D^\ast$ (red dotted boxes) for $m\in\{100,1000,10000\}$. They show}\begin{figure}
\begin{tabular}{cc}
    \includegraphics[width= 0.47\textwidth]{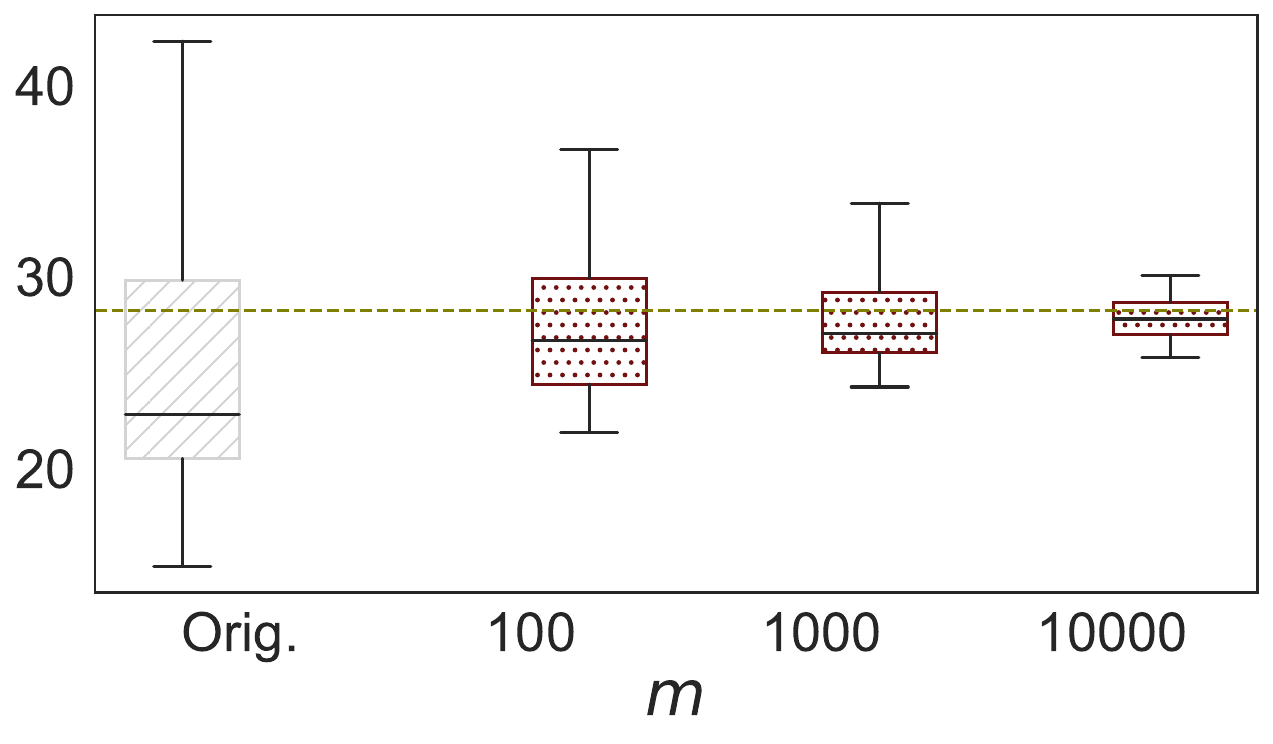} &   \includegraphics[width=0.47\textwidth]{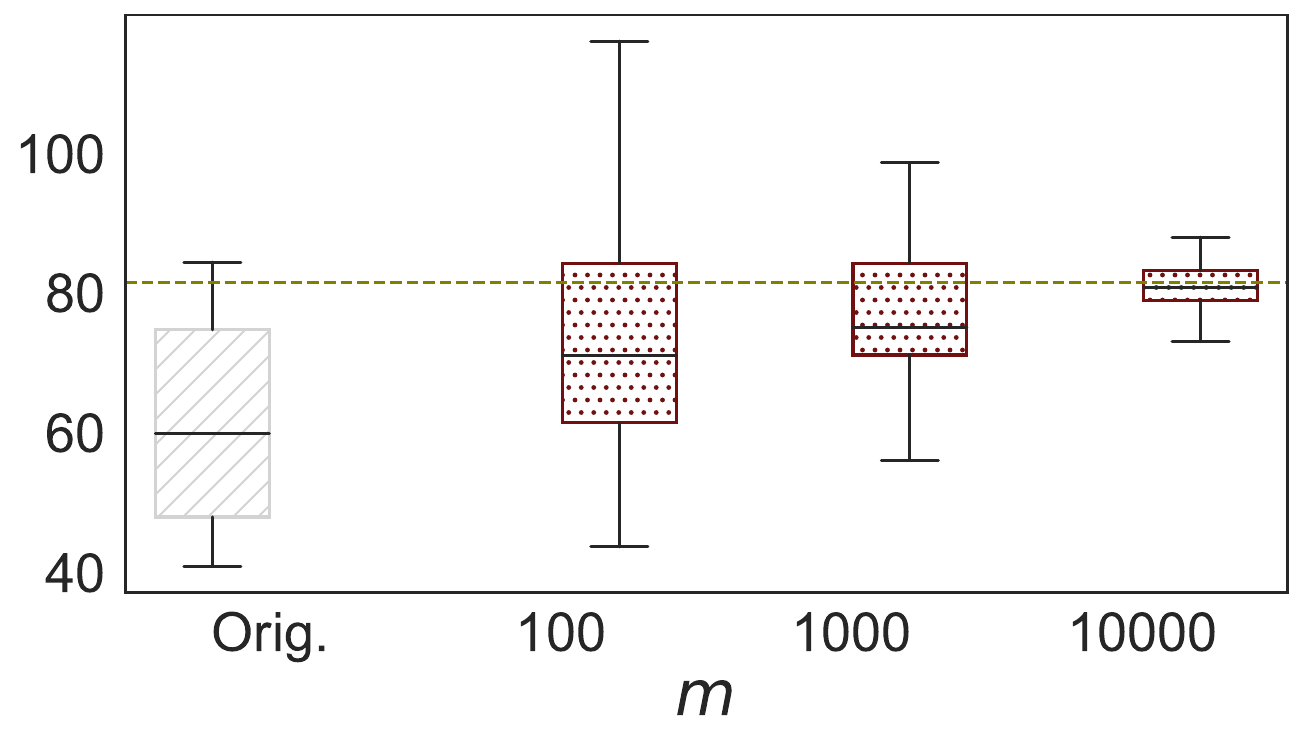} \\~\\
    a) $\ES_{ {0.0025}}$ & b) $\ES_{ {0.0003}}$ \\~\\
 
    \includegraphics[width=0.47\textwidth]{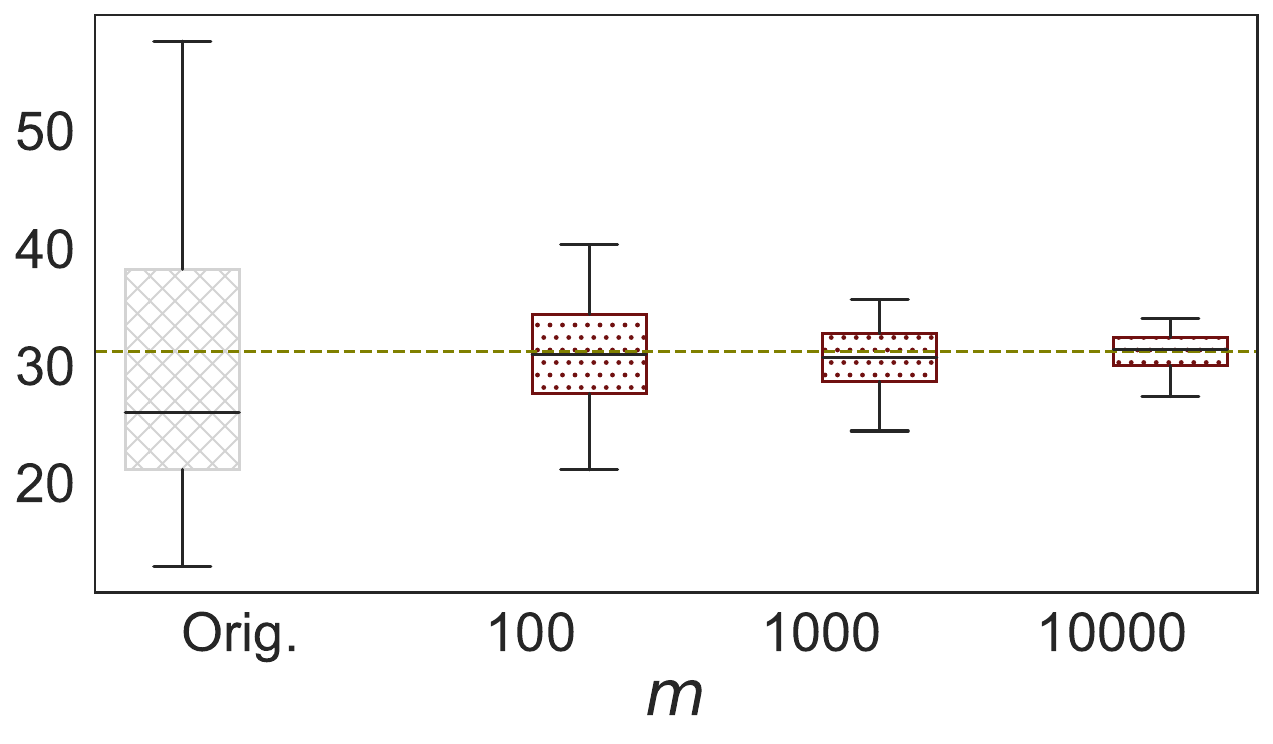} &    \includegraphics[width=0.47\textwidth]{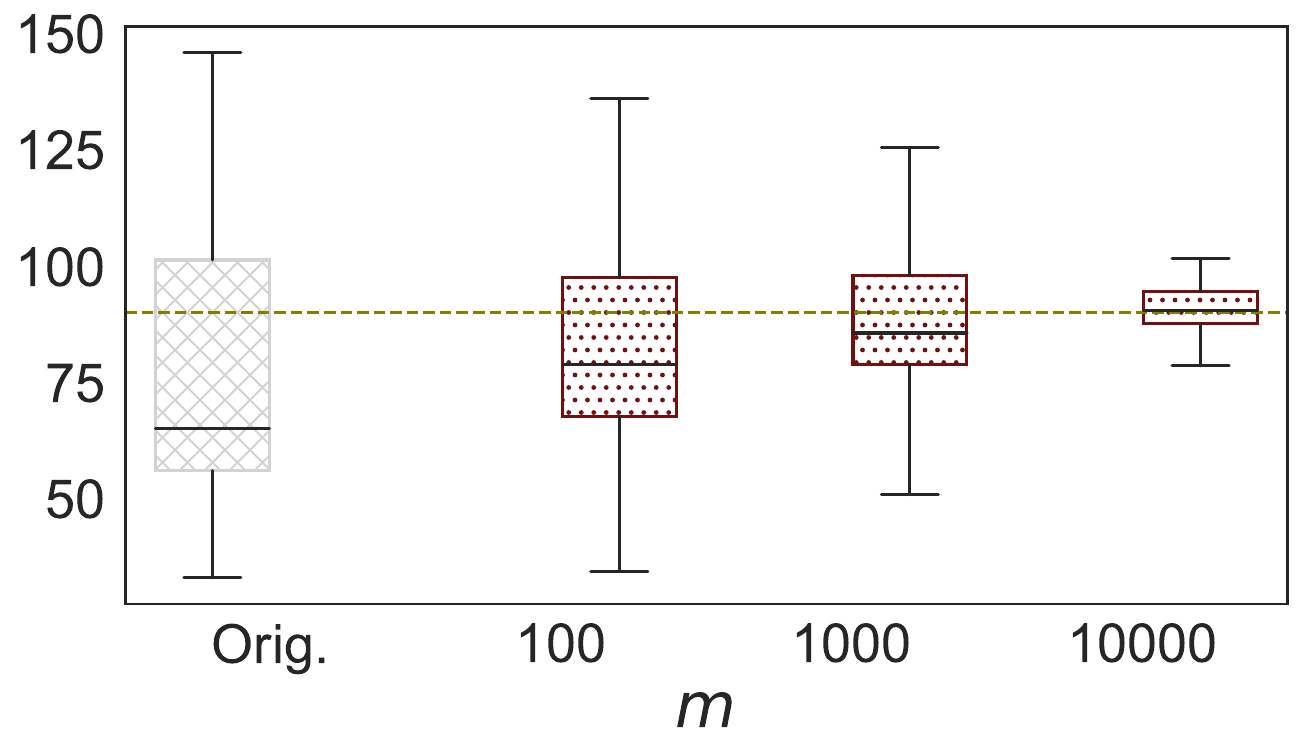}\\~\\
    c) $\MES_{ {0.0025}}$ & d) $\MES_{ {0.0003}}$ \\~\\ 

    \includegraphics[width=0.47\textwidth]{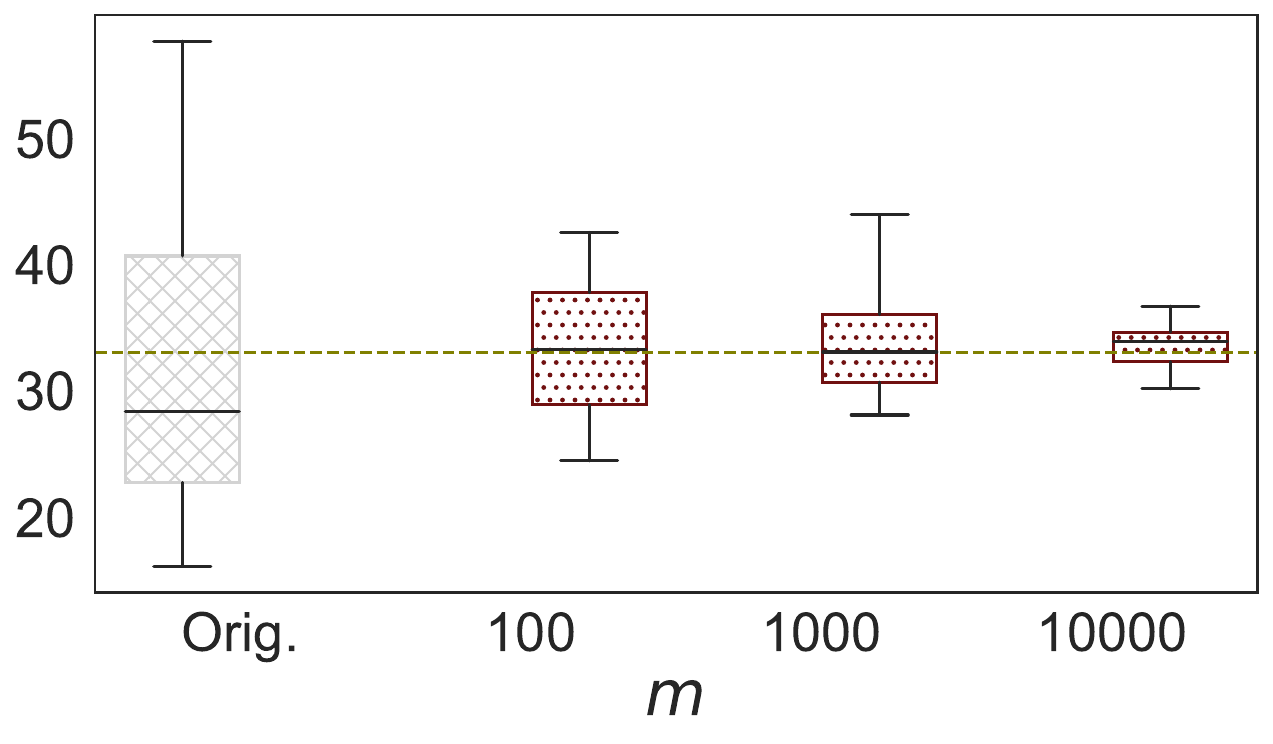} &  
     \includegraphics[width=0.47\textwidth]{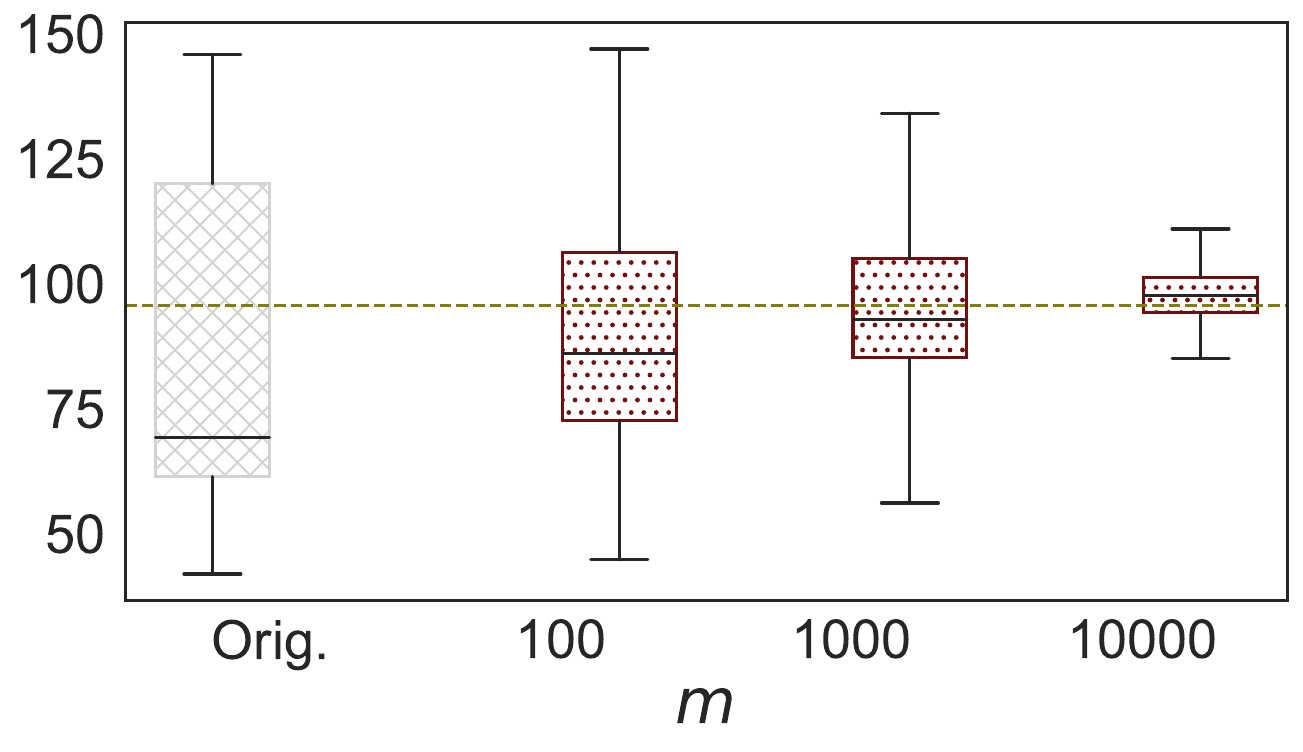}\\~\\
    e) $\DCTE_{ {0.0025}}$ & f) $\DCTE_{ {0.0003}}$\\
    \end{tabular}
  \caption{\rev{Distribution of TRM estimates on 100  {simulated} samples $\mathcal{D}$ {\it (grey hatched box)} and on 100 extended samples \rev{$\mathcal{D} \cup \mathcal{D^\ast}$} {\it (red dotted boxes)}, with $m\in\{100,1000,10000\}$ for the $\ES_\alpha$ (Figures a), b)), the $\MES_\alpha$ (Figures c), d)) and the $\DCTE_\alpha$ (Figures e), f)) at levels {$\alpha \in \{0.0025,0.0003\}$} and $\theta= 2.6$. On each graph, the dashed green horizontal line corresponds to the associated theoretical value.}}
  \label{fig:AlphaThetaTRM}
\end{figure}
\rev{ that the TRMs \rev{estimates on the extended samples $\mathcal{D} \cup \mathcal{D^\ast}$ are more accurate and exhibit less dispersion than those based on the simulated samples $\mathcal{D}$. This is to be expected, as the SB simulation procedure enables us to perform empirical estimation on a larger sample of extremes.}
One may also observe that the estimation on $\mathcal{D} \cup \mathcal{D^\ast}$ induce fairly close  {estimates}  regardless of the level $\alpha$. Note that the joint simulation of multivariate extremes using Algorithm~\ref{alg:JointMGP} improves not only the estimation of $\MES$ and $\DCTE$ (defined with the joint distribution of $\bm X$) but also the estimation of $\ES$ even though it only evolves the marginal distribution of $X_1$.} 

\begin{table}
\begin{center}
\begin{tabular}{c}
\begin{tabular}{c|rr|rr|rr}
\multirow{2}{*}{$\alpha$} & \multicolumn{2}{c|}{$\ES_\alpha$}   & \multicolumn{2}{c|}{$\MES_\alpha$}    & \multicolumn{2}{c}{$\DCTE_\alpha$}  \\
\cline{2-7} & \multicolumn{1}{c|}{$\mathcal{D}$} & \multicolumn{1}{c|}{$\mathcal{D} \cup \mathcal{D^\ast}$} & \multicolumn{1}{c|}{$\mathcal{D}$}   & \multicolumn{1}{c|}{$\mathcal{D} \cup \mathcal{D^\ast}$} & \multicolumn{1}{c|}{$\mathcal{D}$}    & \multicolumn{1}{c}{$\mathcal{D} \cup \mathcal{D^\ast}$} \\ 
 \hline

0.0025     & \multicolumn{1}{r|}{3.8 (2.2)} & 114.7 (6.7)   & \multicolumn{1}{r|}{2.6 (1.6)}  & 83.5 (4.6)                & \multicolumn{1}{r|}{2.3 (1.6)} & 74.4 (3.9) \\ 
\hline
0.0003   & \multicolumn{1}{r|}{0.4 (0.6)} & 13.7 (0.9)  & \multicolumn{1}{r|}{0.3 (0.5)} & 10 (0.7)  & \multicolumn{1}{r|}{0.2 (0.5)} & 8.9 (0.6)  
\end{tabular}\\~\\
\end{tabular}
\caption{\rev{Average the number of observations (standard deviations) on which the TRMs are estimated for the  {simulated} samples $\mathcal{D}$ and for the  {extended} samples $\mathcal{D} \cup \mathcal{D^\ast}$, with $m= 10000$,  for  {$\alpha \in \{0.0025,0.0003\}$} and  $\theta=2.6$.}}
 \label{tab:ImpactThetaTRM}
 \end{center}
\end{table}

\rev{The accuracy of the TRM estimates stems from the use of a larger estimation sample relative to the simulated sample size}. \rev{Table~\ref{tab:ImpactThetaTRM} shows the average number of observations on which the three TRMs are estimated when considering the  {simulated} samples $\mathcal D$ and  {bootstrap} samples $\mathcal D^\ast$ for {$\alpha \in \{0.0025,0.0003\}$}. Overall, it can be noted that for the considered levels of $\alpha$, the TRMs estimation on the  {simulated} samples is conducted with limited data, that is, here with at most 4 observations.} In contrast,  \rev{the SB simulation procedure} allows for a larger sample size to estimate the various TRMs, extending the  {simulated} observations to 100 observations in some cases. 

\subsection{Illustration on real financial data}\label{sec:DataReal}

This section presents a more practical illustration of the SB  {procedure}, and TRMs estimation, using real financial data. The data set consists of weekly negative returns on  stocks from three UK banks: HSBC, Lloyds and RBS. The corresponding time series are denoted $X_1$, $X_2$ and $X_3$ respectively. The sample size of each time series is $n=470$, with temporal resolution spanning from 29/10/2007 to 17/10/2016. The data used in this study were initially extracted from Yahoo Finance, and considered in \citep{kiriliouk2019peaks}. 

As in Section~\ref{subsec:NS}, Algorithm~\ref{alg:JointMGP} is applied  in order to estimate TRMs  with this real data set $\bm X =(X_1,X_2,X_3)$. For this, and as done with the simulated data, the first step is to standardise the data to common exponential margins (see Section~\ref{subsec:NS}).  In our case, we have chosen to fit a Student's $t$-distribution on each risk factor, which is often used for heavy-tailed financial data  \citep[see e.g.][]{mcneil2015quantitative}. 
Goodness-of-fit plots can be found in the Supplementary Material (see Section 5) along with the estimated parameters of the fitted Student's $t$-distributions.


A set of $100$  {bootstrap} samples of weekly negative returns of size $m=10000$ were generated using  {the \rev{SB simulation procedure} (Algorithm~\ref{alg:JointMGP})} in order to  {estimate} the three TRMs  {introduced in Section~\ref{sec:TRM}}. The results are displayed in Table~\ref{tab:TRMRealData} along with the estimations on the original sample $\mathcal{D}$ , i.e.  {the real data without increasing the amount of available data}. The performance assessment conducted under the parametric framework has shown that TRM estimates on the  {bootstrap} samples $\mathcal{D^\ast}$  were more precise than estimations on the extended sample $\mathcal{D} \cup \mathcal{D^\ast}$, hence, we exclusively discuss results from the  {bootstrap} samples $\mathcal{D^\ast}$ in this section. 
 Table~\ref{tab:TRMRealData} shows that the  {SB procedure} allows for the estimation of the TRMs that would be unfeasible if only the original sample was considered. This is to be expected, given that the objective of the  {SB procedure} is to increase the number of available observations in extreme regions in order to enable the performance of empirical estimations in a reliable manner.

\begin{table}
    \centering
   \begin{tabular}{c|rr|rr|rr}
\multirow{2}{*}{Stock} & \multicolumn{2}{c|}{$\ES_{0.0025}$}  & \multicolumn{2}{c|}{$\MES_{0.0025}$}          & \multicolumn{2}{c}{$\DCTE_{0.0025}$}   \\
\cline{2-7} & \multicolumn{1}{c|}{$\mathcal{D}$}  & \multicolumn{1}{c|}{$\mathcal{D^\ast}$} & \multicolumn{1}{c|}{$\mathcal{D}$}        & \multicolumn{1}{c|}{$\mathcal{D^\ast}$} & \multicolumn{1}{c|}{$\mathcal{D}$}    & \multicolumn{1}{c}{$\mathcal{D^\ast}$} \\
\hline

HSBC & \multicolumn{1}{r|}{0.24 (-) } & 0.28 (0.014)    & \multicolumn{1}{c|}{NA} &0.29 (0.023)   & \multicolumn{1}{r|}{0.11 (-)} & 0.32 (0.026)     \\ 
 LL  & \multicolumn{1}{r|}{0.55 (-)} & 0.83 (0.092) & \multicolumn{1}{c|}{NA} & 0.92 (0.139)   & \multicolumn{1}{c|}{NA}  & 1.04 (0.168)   \\ 
 RBS  & \multicolumn{1}{r|}{0.62 (-)} & 0.56 (0.031)   & \multicolumn{1}{c|}{NA} & 0.62 (0.052)      & \multicolumn{1}{c|}{NA} &0.66 (0.057)              
\end{tabular}
\caption{Empirical mean estimates, and standard deviation in brackets, of the TRMs for the three negative returns of interest, using the original data $\mathcal{D}$ and using  {Algorithm~\ref{alg:JointMGP}} with $100$  {bootstrap} samples $\mathcal{D^\ast}$ of size $m=10000$ each . For each TRM, a threshold level  {$\alpha=0.0025$} is considered. NA meaning that there are no observations above the theoretical $\VaR$, so that the computation of the TRM was not possible.}
\label{tab:TRMRealData}
\end{table}

\section{  {Discussion}}

  {This work focused on the development of a non-parametric bootstrap procedure (SB procedure) in order to enlarge the number of joint extreme observations in a data sample. The algorithm is supported by the spectral representation of multivariate GP distributed random vectors. We applied this procedure to the estimation of various risk metrics that can be of great importance for insurance companies.

We want to emphasize that this data augmentation procedure is not restricted to such applications. As highlighted in the summary paper \citep{elmethni2024}, the \rev{SB simulation procedure} can serve as an initial step in a wide range of estimation problems in EVT, including those arising in extremal regression, providing more accurate estimators without relying on too restrictive assumptions \citep[see e.g.,][]{davison2023tail}. Another possible direction for future work is to combine the non-parametric \rev{SB simulation procedure} with generative models, such as Generative Adversarial Networks \citep{Goodfellow2020}, in order to artificially expand the number of extreme observations in the training dataset.}

\paragraph{Codes} All the codes and data are publicly available at \url{https://github.com/MadharNisrine/MultivariateExtremeSimulator.git}.

\section{Theoretical support}\label{sec:theo}

   {Theoretical support for Algorithm~\ref{alg:JointMGP} is provided below through Proposition~\ref{prop:sim1}, which establishes that conditionally on the original sample $\bm \Delta_1, \dots, \bm \Delta_n$, each simulated vector $\bm Z_l^\ast$ (for fixed $l = 1, \dots, m$) can be well approximated by a MGP distribution, provided that the original sample size is sufficiently large, i.e. convergence in distribution as $n \to \infty$.} 





 \begin{proposition}
\label{prop:sim1}
        Let $\bm\Delta_1,\dots,\bm\Delta_n$ be $n$ i.i.d.  realisation of $\bm\Delta$ defined in Equation~\eqref{eq:wk_w1}, and let $\bm\Delta_1^\ast,\dots,\bm\Delta_m^\ast$ be $m$ bootstrap samples of $\bm\Delta_1,\dots,\bm\Delta_n$. Define the associated i.i.d. \textit{SB} samples $$\bm Z_l^\ast=\tilde{E}_l+\bm\Delta_l^\ast,\quad 1\leq l\leq m,$$
        where $\tilde{E}_l\sim Exp(1)$ and is independent of $\bm\Delta_l^\ast$.
        Then, for all $l=1,\dots,m,$ and conditionally on $\bm\Delta_1,\dots,\bm\Delta_n,$ $\bm Z_l^\ast$ converges in distribution to \rev{the MGP distributed vector $\bm Z=E + \bm\Delta$ as $n\to\infty$}. 
    \end{proposition}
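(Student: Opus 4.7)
The plan is to establish convergence in distribution in two conceptually separate steps, exploiting the independence of the two building blocks in $\bm Z_l^\ast = \tilde E_l + \bm \Delta_l^\ast$, and then invoke the representation in Equation~\eqref{eq:MGPEq} to identify the limit as standard MGP.

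First, I would exploit the fact that, conditionally on the sample $\bm \Delta_1,\dots,\bm \Delta_n$, the bootstrap draw $\bm \Delta_l^\ast$ has, by construction, exactly the empirical distribution $\widehat F_n$ of $\bm \Delta_1,\dots,\bm \Delta_n$. Since the $\bm \Delta_i$ are i.i.d.\ copies of the random vector $\bm \Delta$ with common distribution $F_{\bm \Delta}$, the multivariate Glivenko--Cantelli theorem yields that $\widehat F_n$ converges almost surely (uniformly) to $F_{\bm \Delta}$ as $n\to\infty$. Hence, for almost every realisation of $(\bm \Delta_i)_{i\geq 1}$, the conditional law of $\bm \Delta_l^\ast$ converges weakly to the law of $\bm \Delta$.

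Second, since by construction $\tilde E_l \sim \mathrm{Exp}(1)$ is independent of $\bm \Delta_l^\ast$ and its distribution does not depend on $n$, the conditional law of $\bm Z_l^\ast = \tilde E_l + \bm \Delta_l^\ast$ is the convolution of an $\mathrm{Exp}(1)$ law with the (random) conditional law of $\bm \Delta_l^\ast$. Applying the continuous mapping theorem to the addition map $(e,\bm \delta)\mapsto e+\bm \delta$ on $\mathbb{R}\times \mathbb{R}^d$ (or equivalently Slutsky's lemma in the joint form, using the fact that $\tilde E_l$ is fixed in distribution and independent), one deduces that, conditionally on the sample, $\bm Z_l^\ast$ converges in distribution to $E+\bm \Delta$, where $E\sim \mathrm{Exp}(1)$ is independent of $\bm \Delta$.

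Finally, by Equation~\eqref{eq:MGPEq} (which is itself a direct consequence of the spectral representation~\eqref{eq:SMGPRep}, since with $\bm Z=E+\bm T-\max_k T_k$ one has $\max_k Z_k=E$ and thus $\bm \Delta=\bm T-\max_k T_k$ independent of $E$), the random vector $E+\bm \Delta$ is exactly a standard MGP vector, which gives the conclusion. The only delicate point is the handling of the conditional statement: the convergence of $\widehat F_n$ must be almost sure along the sequence of samples so that weak convergence of the conditional law holds for almost every realisation of $(\bm \Delta_i)_{i\geq 1}$; the remaining ingredients (independence of $\tilde E_l$ from the sample, continuity of addition) are structural and follow immediately from the construction of Algorithm~\ref{alg:JointMGP}.
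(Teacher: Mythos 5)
Your argument is correct and rests on the same two pillars as the paper's proof: (i) conditionally on the sample, the law of $\bm\Delta_l^\ast$ is the empirical measure and converges weakly to the law of $\bm\Delta$, and (ii) $\tilde E_l$ is independent with a fixed $\mathrm{Exp}(1)$ law, so the limit is the law of $E+\bm\Delta$, a standard MGP vector. Where you diverge is in the execution of the weak-convergence step and in the identification of the limit. The paper computes the conditional distribution function explicitly, $\mathbb{P}\left(\bm Z_l^\ast\leq \bm x\mid \bm\Delta_1,\dots,\bm\Delta_n\right)=1-\mathbb{E}\left[\min\left(1,e^{-\min_{1\leq j\leq d}\{x_j-\Delta_{l,j}^\ast\}}\right)\mid \bm\Delta_1,\dots,\bm\Delta_n\right]$, applies the Portmanteau theorem to this single bounded Lipschitz test function, and recognises the limit as the MGP distribution function via Proposition 8 of \citet{rootzen2018multivariate2}. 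You instead work at the level of measures (almost-sure weak convergence of empirical measures, weak convergence of the product with the fixed exponential law, continuous mapping for the sum) and identify $E+\bm\Delta$ as standard MGP directly from the representation \eqref{eq:SMGPRep}--\eqref{eq:MGPEq}; for that last step it is worth noting explicitly that $\bm\Delta$ qualifies as a spectral vector $\bm T$ since $\max_{1\leq k\leq d}\Delta_k=0$ almost surely, so $E+\bm\Delta$ is exactly of the form \eqref{eq:SMGPRep}. Your route is slightly more abstract but avoids the CDF computation and makes the almost-sure nature of the conditional convergence (which the paper leaves implicit in its citation of bootstrap consistency) explicit. One minor quibble: ``Slutsky's lemma'' is not the right name for combining the two components, since that lemma concerns convergence in probability to a constant; the tool you actually need and correctly describe is weak convergence of product measures together with the continuous mapping theorem.
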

    \begin{proof}
        Let $\bm x =(x_1,\dots,x_d)\in\mathbb{R}^d$ and $l=1,\dots,m$. \begin{eqnarray*}
            \mathbb{P}\left(\bm Z_l^\ast\leq \bm x\mid \bm \Delta_1,\dots,\bm\Delta_n\right) &=& \mathbb{P}\left(\tilde{E}_l\leq \bm x - \bm\Delta_l^\ast\mid \bm \Delta_1,\dots,\bm\Delta_n\right), 
            \end{eqnarray*}
            where the right-hand side inequality is meant component-wise, \rev{thus}
            $$\tilde{E}_l\leq \bm x - \bm\Delta_l^\ast \Longleftrightarrow  \tilde{E}_l \leq \rev{\min_j(} x_j-\Delta_{l,j}^\ast\rev{)}.$$
            \rev{\begin{eqnarray*}
                \mathbb{P}\left(\bm Z_l^\ast\leq \bm x\mid \bm \Delta_1,\dots,\bm\Delta_n\right) &=& \mathbb{E}[\mathbbm{1}_{\bm Z_l^\ast\leq \bm x} \mid \bm \Delta_1,\dots,\bm\Delta_n] \\
                &=& \mathbb{E}[\mathbb{E}[\mathbbm{1}_{Z_l^\ast\leq \bm x} \mid \bm \Delta_l^\ast, \bm \Delta_1,\dots,\bm\Delta_n]\mid \bm \Delta_1,\dots,\bm\Delta_n]. 
                \end{eqnarray*}
                On the other hand, and since $\tilde E_l$ is independent of $\bm \Delta_l^\ast, \bm \Delta_1,\dots,\bm\Delta_n$, \begin{eqnarray*}
                    \mathbb{E}[\mathbbm{1}_{Z_l^\ast\leq \bm x} \mid \bm \Delta_l^\ast, \bm \Delta_1,\dots,\bm\Delta_n] &=& 
                    \mathbb{E}[\mathbbm{1}_{\tilde E_l\leq \min_j( x_j-\Delta_{l,j}^\ast)} \mid \bm \Delta_l^\ast, \bm \Delta_1,\dots,\bm\Delta_n]
                    \\ &=& 
                    \int_0^{+\infty} \mathbbm{1}_{t\leq \min_j(x_j-\Delta_{l,j}^\ast)} e^{-t}dt \\
                    &=& \min(0, 1-e^{-\min_j(x_j-\Delta_{l,j}^\ast)}).
                \end{eqnarray*}
                Therefore, } 
                \begin{equation}
                    \label{eq::cdf_boot}
            \mathbb{P}\left(\bm Z_l^\ast\leq \bm x\mid \bm \Delta_1,\dots,\bm\Delta_n\right) = 1 -\mathbb{E}\left[ \min\left(1,e^{-\min_{1\leq j\leq d} \{x_j-\Delta^\ast_{l,j}\}}\right)\mid \bm \Delta_1,\dots,\bm\Delta_n\right] \end{equation}
            \rev{By construction of the bootstrap sample, the conditional distribution of $\bm \Delta^\ast_l$ given $\bm\Delta_1,\dots,\bm \Delta_n$ is the empirical distribution of $\bm\Delta_1,\dots,\bm \Delta_n$:
            $$\forall \bm x\in\mathbb{R}^d,\mbox{ } \mathbb{P}\left(\bm\Delta_l^\ast\leq \bm x\mid \bm\Delta_1,\dots,\bm \Delta_n\right)=\frac{1}{n}\sum_{i=1}^{n}\mathbbm{1}_{\bm\Delta_i\leq \bm x}.$$ Then, by Glivenko-Cantelli theorem, the empirical cdf on the right hand side converges in probability to the common cdf of the $\bm\Delta_i$'s. It follows that $\bm \Delta^\ast_l$ given $\bm\Delta_1,\dots,\bm \Delta_n$ converges in distribution to $\bm\Delta$, thus for any bounded and measurable test function $g:\mathbb{R}^d\to\mathbb{R}$ $$\mathbb{E}\left[g(\bm\Delta_l^\ast)\mid \bm\Delta_1,\dots,\bm \Delta_n\right]\to \mathbb{E}\left[g(\bm\Delta)\right],\text{ as } n\to\infty.$$ For a fixed $\bm x\in\mathbb{R}^d$, the function $f$ defined on $\mathbb{R}^d$ by $f(\bm\delta)=\min\left(1,e^{-\min_{1\leq j\leq d} \{x_j-\delta_{j}\}}\right)$ is continuous and bounded (by $1$), so as $n\to\infty$ $$ \mathbb{E}\left[ \min\left(1,e^{-\min_{1\leq j\leq d} \{x_j-\Delta^\ast_{l,j}\}}\right)\mid \bm \Delta_1,\dots,\bm\Delta_n\right] \to \mathbb{E}\left[ \min\left(1,e^{-\min_{1\leq j\leq d} \{x_j-\Delta_{j}\}}\right)\right].$$ The limit on the right-hand side is the cumulative distribution function of the MGP distributed vector $\bm Z=E+\bm\Delta$ \citep[see][Proposition 8]{rootzen2018multivariate2}, and combining with \eqref{eq::cdf_boot} we get the statement of the proposition.}
    \end{proof}
\rev{
\paragraph{Acknowledgements} 
We thank the referees for their comments which
have led to important improvements of the paper.}

\bibliographystyle{chicago}%
\bibliography{mybibsimu}

\end{document}